
%


\documentclass[11pt,letterpaper]{article}

   
%
\usepackage{amsmath,amsfonts,amsthm,amssymb,stmaryrd,graphicx,relsize,bm}  
\topmargin -3pc 


\theoremstyle{plain}

\numberwithin{equation}{section}

\newtheorem{thm}{Theorem}[section]
\newtheorem{lem}[thm]{Lemma}

\newenvironment{exam}[1]
{\begin{flushleft}\textbf{Example #1}.\enspace}%
{\end{flushleft}}

\allowdisplaybreaks  

\newcommand{\trace}{tr}
\newcommand{\instr}{In}
\newcommand{\ob}{Ob}
\newcommand{\ityes}{\textit{yes}}
\newcommand{\itno}{\textit{no}}

\newcommand{\escript}{\mathcal{E}}
\newcommand{\hscript}{\mathcal{H}}
\newcommand{\iscript}{\mathcal{I}}
\newcommand{\jscript}{\mathcal{J}}
\newcommand{\kscript}{\mathcal{K}}
\newcommand{\lscript}{\mathcal{L}}
\newcommand{\oscript}{\mathcal{O}}
\newcommand{\sscript}{\mathcal{S}}

\newcommand{\hscripthat}{\widehat{\hscript}}
\newcommand{\iscripthat}{\widehat{\iscript}}
\newcommand{\jscripthat}{\widehat{\jscript}}
\newcommand{\kscripthat}{\widehat{\kscript}}

\newcommand{\iscriptbar}{\overline{\iscript}}

\newcommand{\kscriptbar}{\overline{\kscript}}

\newcommand{\ab}[1]{\left|#1\right|}
\newcommand{\brac}[1]{\left\{#1\right\}}
\newcommand{\paren}[1]{\left(#1\right)}
\newcommand{\sqbrac}[1]{\left[#1\right]}

\errorcontextlines=0

\begin{document}

\title{MULTI-OBSERVABLES AND\\MULTI-INSTRUMENTS}
\author{Stan Gudder\\ Department of Mathematics\\
University of Denver\\ Denver, Colorado 80208\\
sgudder@du.edu}
\date{}
\maketitle

\begin{abstract}
This article introduces the concepts of multi-observables and multi-instruments in quantum mechanics. A multi-observable $A$ (multi-instrument $\iscript$) has an outcome space of the form $\Omega =\Omega _1\times\cdots\times\Omega _n$ and is denoted by $A_{x_1\cdots x_n}$ ($\iscript _{x_1\cdots x_n}$) where
$(x_1,\ldots ,x_n)\in\Omega$. We also call $A$ ($\iscript$) an $n$-observable ($n$-instrument) and when $n=2$ we call $A$ ($\iscript$) a bi-observable (bi-instrument).
We point out that bi-observables $A$ ($\iscript$) and bi-instruments have been considered in past literature, but the more general case appears to be new. In particular, two observables (instruments) have been defined to coexist or be compatible if they possess a joint bi-observable (bi-instrument). We extend this definition to $n$ observables and $n$ instruments by considering joint marginals of $n$-observables and joint reduced marginals of $n$-instruments. We show that a $n$-instrument measures a unique $n$-observable and if a finite umber of instruments coexist, then their measured observables coexist. We prove that there is a close relationship between a nontrivial $n$-observable and its parts. Moreover, a similar result holds for instruments. We next show that a natural definition for the tensor product of a finite number of instruments exist and possess reasonable properties. We then discuss sequential products of a finite number of observables and instruments. We present various examples such as Kraus, Holevo and L\"uders instruments.
\end{abstract}

\section{Introduction}  
In this introduction we speak in general terms and the precise definitions are given in Section~2. Quantum mechanics can be described as a theory of measurements
\cite{bgl95,dl70,hz12,lud51,nc00}. Mathematically, these measurements are given by observables and instruments. An observable $A$ corresponds to an experiment that has various outcomes and when an outome is $x$, we say that the effect $A_x$ occurs. An instrument is more general and gives more information about a quantum system than an observable. We think of an instrument as an apparatus that can be employed to measure a unique observable and also updates the state of the system when an outcome is registered \cite{bgl95,dl70,hz12}.

A multi-observable is an observable $A$ whose outcome space has the form $\Omega _A=\Omega _1\times\cdots\times\Omega _n$. We also say that $A$ is an
$n$-observable. If $(x_1,\ldots ,x_n)\in\Omega _A$ is an outcome for $A$, we denote the corresponding physical effect by $A_{x_1\cdots x_n}$. The $i$-marginal of $A$ is defined as the observable with outcome space $\Omega _i$ and given by
\begin{equation*}
A_y^i=\sum\brac{A_{x_1\cdots x_{(i-1)}yx_{(i+1)}\cdots x_n}\colon x_1,\ldots ,x_{(i-1)},x_{(i+1)},\cdots x_n}
\end{equation*}
Thus $A_y^i$ is defined to be the sum of $A_{x_1\cdots x_n}$ whose $i$th component is $y$. In a similar way, we define a multi-instrument (or $n$-instrument) to be an instrument $\iscript$ whose outcome space has the form $\Omega _\iscript=\Omega _1\times\cdots\Omega _n$ and the $i$-marginal of $\iscript$ is similar. If $n=2$, we call $A(\iscript )$ a bi-observable (bi-instrument). Although the idea of a general $n$-observable or $n$-instrument appears to be new, bi-observables and bi-instruments have been considered in the past \cite{dapt22,bkmpt22,ls22,mt122,mt222}. In particular, two observables $A,B$ on the same Hilbert space are said to coexist (be compatible) if there exists a bi-observable $C$ such that its marginals satisfying $C^1=A$, $C^2=B$. In this way, $A$ and $B$ are simultaneously measurable using $C$. We now extend this idea by defining observables $B_i$, $i=1,2,\ldots ,n$, to coexist (be compatible) if there exists a joint $n$-observable $C$ with marginals $C^i=B_i$, $i=1,2,\ldots, n$. In particular, the marginals of an $n$-observable coexist.

For observables $A_i$, $i=1,2,\ldots ,n$, on Hilbert space $H_i$, we define their tensor product to be the $n$-observable $A$ given by
\begin{equation*}
A_{x_1\cdots x_n}=A_{1x_1}\otimes\cdots\otimes A_{nx_n}
\end{equation*}
The marginals become
\begin{equation*}
A_{x_i}^i=I_1\otimes\cdots\otimes I_{i-1}\otimes A_{x_i}\otimes I_{i+1}\otimes\cdots\otimes I_n
\end{equation*}
where $I_j$ is the identity operator on the $j$th Hilbert space $H_j$. We obtain the reduced observables $^{i\!\!}A$ corresponding to $A$ by taking the partial traces of $A$ relative to $H_j$, $j\ne i$, and dividinhg by the product of $\dim (H_j)$, $j\ne i$. We then obtain $^{i\!\!}A^i=A_i$, $i=1,2,\ldots ,n$. For a general instrument on
$H_1\otimes\cdots\otimes H_n$, the reduced instrument $^{i\!\!}\iscript$ corresponding to $\iscript$ is obtained by taking the partial traces of $\iscript$ relative to
$H_j$, $j\ne i$. We say that instruments $\iscript _i$, $i=1,2,\ldots ,n$, coexist (are compatible) if there exists an $n$-instrument $\iscript$ on
$H_1\otimes\cdots\otimes H_n$ such $^{\!\!i}\iscript ^i=\iscript _i$, $i=1,2,\ldots ,n$, where the marginal $\iscript ^i$ is defined like the marginal of an observable.

A positive operator $\rho$ on a Hilbert space $H$ with trace $\trace (\rho )=1$ is called a state \cite{bgl95,dl70,hz12,nc00}. States are employed to describe the condition of a quantum system and the set of a states on $H$ is denoted by $\sscript (H)$. If $\rho\in\sscript (H)$ and $A$ is an observable on $H$ then $0\le A_x\le I$ is an operator on $H$ and $\trace (\rho A_x)$ is the probability that a measurement of $A$ results in outcome $x$ when the system is in state $\rho$.
We call $\Phi _\rho ^A(x)=\trace (\rho A_x)$ the $\rho$-distribution of $A$. If $\iscript$ is an instrument on $H$, then $\iscript _x(\rho )$ gives the (unnormalized) updated stat when a measurement of $\iscript$ results in outcome $x$. The $\rho$-distribution of $\iscript$ is given by
$\Phi _\rho ^\iscript (x)=\trace\sqbrac{\iscript _x(\rho )}$. If $\Omega _\iscript =\Omega _A$ and $\Phi _\rho ^\iscript (x)=\Phi _\rho ^A(x)$ for all
$x\in\Omega _\iscript$,
$\rho\in\sscript (H)$, we say that $\iscript$ measures the observable $A$. It easily follows that an instrument $\iscript$ measures a unique observable $\iscripthat$. However, an observable is measured by many instruments. In Section~2 we show that if instruments $\iscript _i$, $i=1,2,\ldots ,n$, coexist, then $\iscripthat _i$ also coexist. The converse of this statement does not hold
\cite{hz12, nc00}. We also consider parts of observables and instruments in Section~2 \cite{gud220,hz12}.

In section~3, we first show that there is a close relationship between a nontrivial $n$-observable and its parts. A similar result holds for instruments. We next show that a natural definition of the tensor product of a finite number of instruments exists and then prove that this definition possesses reasonable properties. We then discuss sequential products of a finite number of observables and instruments. We present various examples such as Kraus, Holevo and L\"uders instruments
\cite{hol82,kra83,lud51}.
\section{Basic Definitions and Concepts}  
All the Hilbert spaces in this article are assumed to be finite-dimensional. For a Hilbert space $H$, we denote the set of linear operators on $H$ by $\lscript (H)$ and the set of self-adjoint operators on $H$ by $\lscript _S(H)$. The zero and identity operators are designated by $0,I$, respectively. When the Hilbert space needs to be specified we write $I_H$ instead of $I$. An operator $a\in\lscript _S(H)$ that satisfies $0\le a\le I$ is called an \textit{effect} and the set of effects on $H$ is denoted by
$\escript (H)$ \cite{bgl95,dl70,hz12}. We consider an effect as a two-valued \ityes -\itno\ experiment and when the value \ityes\ is obtained, then $a$ \textit{occurs}
\cite{bgl95,dl70,hz12}. If $\Omega _A$ is a finite set, then a set of effects $A=\brac{A_x\colon x\in\Omega _A}\subseteq\escript (X)$ that satisfies
$\sum\limits _{x\in\Omega _A}A_x=I$ is called an \textit{observable}. The set of observables on $H$ is denoted by $\ob (H)$. We call $\Omega _A$ the
\textit{outcome space} for $A$ and when $\Delta\subseteq\Omega _A$, the map $A(\Delta )=\sum\limits _{x\in\Delta}A_x$ is an \textit{effect-valued measure}
(or \textit{positive operator-valued measure} (POVM)) \cite{bgl95, dl70, hz12}. A \textit{state} on $H$ is a positive operator $\rho$ with trace $\trace (\rho )=1$
\cite{bgl95,dl70,hz12,nc00} and the set of states on $H$ is denoted by $\sscript (H)$. If $\rho\in\sscript (H)$, $a\in\escript (H)$ then $0\le\trace (\rho a)\le 1$ and we call $P_\rho (a)=\trace (\rho a)$ the \textit{probability that} $a$ \textit{occurs}. Of course, $P_\rho (0)=0$ and $P_\rho (I)=1$ for all $\rho\in\sscript (H)$ so $0$ never occurs and $I$ always occurs. For $A\in\ob (H)$, $\rho\in\sscript (H)$, the probability measure on $\Omega _A$ given by
\begin{equation*}
\Delta\mapsto P_\rho ^A(\Delta )=\trace\sqbrac{\rho A(\Delta )}
\end{equation*}
is called the $\rho$-\textit{distribution} of $A$ in the state $\rho$ \cite{bgl95,dl70,hz12,nc00}. A set of effects $\brac{a_i\colon i=1,2,\ldots ,n}\subseteq\escript (H)$
\textit{coexist} (are \textit{compatible}) if there exists an observable $A\in\ob (H)$ such that $a_i=A(\Delta _i)$, $\Delta _i\subseteq\Omega _A$, $i=1,2,\ldots ,n$. It can be shown that if a set of effects mutually commute, then they coexist but the converse does not hold \cite{hz12,nc00}. We call the observable $A$ a
\textit{joint observable} for $\brac{a_i\colon i=1,2,\ldots ,n}$. Clearly, any subset of the effects of an observable coexist. An observable $B$ is \textit{part} of an observable $A$ if there exists a surjection $f\colon\Omega _A\to\Omega _B$ such that
\begin{equation*}
B_y=A\sqbrac{f^{-1}(y)}=\sum\brac{A_x\colon f(x)=y}
\end{equation*}
and we write $B=f(A)$ \cite{gud220,hz12}. In this case, we have $\Omega _B=f(\Omega _A)$. We denote the cardinality of a finite set $S$ by $\ab{S}$. If
$\ab{f(\Omega _A)}=1$, in which case $B=f(A)$ is a trivial observable $B_y=A(\Omega _A)=I$ and we call $B$ a \textit{trivial part} of $A$ \cite{gud220}.

An observable $A$ is a \textit{multi-observable} if $\Omega _A$ can be arranged to be a product set $\Omega _1\times\cdots\times\Omega  _n$ and we then call
$A$ an $n$-\textit{observable}. To be  precise, $A$ is an $n$-\textit{observable} if there exists a bijection
$h\colon\Omega _A\to\Omega _1\times\cdots\times\Omega _n$, $\Omega _i\ne\emptyset$, $i=1,2,\ldots ,n$ and we write
$\Omega _A\approx\Omega _1\times\cdots\times\Omega _n$. Any observable is trivially an $n$-observable because we can let $\Omega _i=\brac{1}$,
$i=1,2,\ldots ,n-1$ and then 
\begin{equation*}
\Omega _A\approx\Omega _1\times\cdots\times\Omega _{n-1}\times\Omega _A
\end{equation*}
We say that $A$ is a \textit{nontrivial} $n$-\textit{observable} if $\Omega _A\approx\Omega _1\times\cdots\times\Omega _n$ where $\ab{\Omega _i}\ge 2$, $i=1,2,\ldots ,n$. Let $A\in\ob (H)$ be a nontrivial $n$-observable in which case we can assume that $\Omega _A=\Omega _1\times\cdots\times\Omega _n$ where
$\ab{\Omega _i}\ge 2$, $i=1,2,\ldots ,n$. Letting $y\in\Omega _i$, we use the notation
\begin{equation*}
\Omega _A^{(x_i=y)}=\Omega _1\times\cdots\times\Omega _{i-1}\times\brac{y}\times\Omega _{i+1}\times\cdots\times\Omega _n
\end{equation*}
We define the $i$-\textit{marginal} of $A$ to be the observable $A^i\in\ob (H)$ such that $\Omega _{A^i}=\Omega _i$ given by
\begin{equation*}
A_y^i=A\sqbrac{\Omega _A^{(x_i=y)}}
\end{equation*}
Thus, $A_y^i$ sums all the effects in $A$ whose $i$th component is $y$. We now show that $A^i$ is part of $A$, $i=1,2,\ldots ,n$. Let
$f_i\colon\Omega _A\to\Omega _i$ be the surjection defined by
\begin{equation*}
f_i\paren{(x_1,\ldots ,x_n)}=x_i
\end{equation*}
$i=1,2,\ldots ,n$. Then for $y\in\Omega _i$ we have
\begin{equation}                
\label{eq21}
f_i(A)_y=A\sqbrac{f_i^{-1}(y)}=A\sqbrac{\brac{x\in\Omega _A\colon f_i(x)=y}}=A\sqbrac{\Omega _A^{(x_i=y)}}=A_y^i
\end{equation}
Hence, $A^i=f_i(A)$ so $A^i$ is part of $A$, $i=1,2,\ldots ,n$. We say that a set of observables $B_i\in\ob (H)$, $i=1,2,\ldots ,n$, \textit{coexist}
(are \textit{compatible}) if there exists a \textit{joint} $n$-observable $A\in\ob (H)$ such that the marginals $A^i=B_i$, $i=1,2,\ldots ,n$. In particular, the marginals of an $n$-observable coexist. Notice that if $B_1,\ldots ,B_n$ coexist, then the effects for $B_1,\ldots ,B_n$ coexist. Indeed, in this case we have that $B_i=A^i$ for a joint $n$-observable $A$. We then obtain $B_i=f_i(A)$ and for every $y\in\Omega _{B_i}$ we conclude that
\begin{equation*}
B_{iy}=A_y^i=f_i(A)=A\sqbrac{f_i^{-1}(y)}
\end{equation*}
Since this holds for all $i=1,2,\ldots ,n$, the effects for $B_1,\ldots ,B_n$ coexist.

We now consider some examples of $n$-observables.

\begin{exam}{1}  
Let $A_i\in\ob (H)$, $i=1,2,\ldots ,n$. The \textit{L\"uders sequential product} \cite{gn01,gud22,lud51} of these observables is the $n$-observable
$A\in\ob (H)$ with $\Omega _A=\Omega _{A_1}\times\cdots\times\Omega _{A_n}$ given by
\begin{equation*}
A_{x_1\cdots x_n}=(A_1\circ\cdots\circ A_n)_{x_1\cdots x_n}=A_{1x_1}^{1/2}\cdots A_{(n-1)x_{n-1}}^{1/2}A_{nx_n}A_{(n-1)x_{n-1}}^{1/2}\cdots A_{1x_1}^{1/2}
\end{equation*}
The marginals are
\begin{align*}
A_{x_1}^1&=A_{1x_1}\\
A_{x_2}^2&=\sum _{x_1\in\Omega _{A_1}}A_{1x_1}^{1/2}A_{2x_2}A_{1x_1}^{1/2}=\sum _{x_1\in\Omega _{A_1}}(A_1\circ A_2)_{x_1x_2}\\
&\ \vdots\\
A_{x_n}^n&=\sum _{x_1,\ldots ,x_{n-1}}A_{x_1}^{1/2}\cdots A_{(n-1)x_{n-1}}^{1/2}A_{nx_n}A_{(n-1)x_{n-1}}^{1/2}\cdots A_{1x_1}^{1/2}\\
   &=\sum _{x_1,\ldots ,x_{n-1}}(A_1\circ\cdots\circ A_n)_{x_1\cdots x_n}
\end{align*}
If $a\in\escript (H)$, we define the \textit{L\"uders map} $\lscript ^{(a)}(B)=a^{1/2}Ba^{1/2}$ for all $B\in\lscript (H)$ \cite{gn01,gud22,lud51}. The $\rho$-distribution of $A\in\ob (H)$ becomes
\begin{equation*}
\Phi _\rho ^A\paren{(x_1,\ldots ,x_n)}=\trace (\rho A_{x_1\cdots x_n})=\trace\sqbrac{\lscript ^{(A_{(n-1)x_{n-1}})}\cdots\lscript ^{(A_{1x_1})}(\rho )A_{nx_n}}
\hskip 1pc\qedsymbol
\end{equation*}
\end{exam}

\begin{exam}{2}  
For $A_i\in\ob (H_i)$, $i=1,2,\ldots ,n$, we define their \textit{tensor product} \cite{gud220} to be the $n$-observable $A\in\ob (H_1\otimes\cdots\otimes H_n)$ with
$\Omega _A=\Omega _{A_1}\times\cdots\times\Omega _{A_n}$ given by
\begin{equation*}
A_{x_1\cdots x_n}=A_{1x_1}\otimes\cdots\otimes A_{nx_n}
\end{equation*}
The marginals of $A$ become
\begin{align*}
A_{x_1}^1&=A_{1x_1}\otimes I_{H_2}\otimes\cdots I_{H_n}\\
A_{x_2}^2&=I_{H_1}\otimes A_{2x_2}\otimes I_{H_3}\otimes\cdots\otimes I_{H_n}\\
&\ \vdots\\
A_{x_n}^n&=I_{H_1}\otimes\cdots\otimes I_{H_{n-1}}\otimes A_{nx_n}
\end{align*}
Although the effects of $A$ need not commute and the effects of $A^i$ need not commute, we see that the effects of $A^i$ commute with those of $A^j$,
$i\ne j$. If $\rho\in\sscript (H_1\otimes\cdots\otimes H_n)$, the $\rho$-distribution of $A$ becomes
\begin{equation*}
\Phi _\rho ^A\paren{(x_1,\ldots ,x_n)}=\trace (\rho A_{1x_1}\otimes\cdots\times A_{nx_n})
\end{equation*}
When $\rho$ is a product state $\rho =\rho _1\otimes\cdots\otimes\rho _n$ we have
\begin{align*}
\Phi _\rho ^A\paren{(x_1,\ldots ,x_n)}&=\trace (\rho A_{1x_1})\trace (\rho A_{2x_2})\cdots\trace (\rho A_{nx_n})\\
   &=\Phi _{\rho _1}^{A_1}(x_1)\Phi _{\rho _2}^{A_2}(x_2)\cdots\Phi _{\rho _n}^{A_n}(x_n)
\end{align*}
We define the \textit{reduced} $n$-\textit{observable} $^{i\!\!}A\in\ob (H_i)$ with $\Omega _{(^{i\!\!}A)}=\Omega _A$, to be given by the partial trace
\begin{equation}                
\label{eq22}
^{i\!\!}A_{x_1\cdots x_n}=\tfrac{1}{m_1\cdots m_{i-1}m_{i+1}\cdots m_n}\,\trace _{H_1}\cdots\trace _{H_{i-1}}\trace _{H_{i_1}}\
   \cdots\trace _{H_n}(A_{x_1\cdots x_n})
\end{equation}
where $m_i=\dim (H_i)$, $i=1,2,\ldots ,n$. In this particular case, we have
\begin{align*}
&^{i\!\!}A_{x_1\cdots x_n}\\
&\ =\tfrac{1}{m_1\cdots m_{i-1}m_{i+1}\cdots m_n}\,\trace (A_{1x_1})\cdots\trace (A_{(i-1)x_{i-1}})\trace (A_{(i+1)x_{i+1}})
   \cdots\trace (A_{nx_n})A_{ix_i}
\end{align*}
We then obtain the reduced marginals $^{i\!\!}A_{x_i}^i=A_{ix_i}$ so $^{i\!\!}A^i=A_i$, $i=1,2,\ldots ,n$. We also have the \textit{mixed reduced marginals}
$^{i\!\!}A^j\in\ob (H_i)$, $i\ne j$, with $\Omega _{(^{i\!\!}A^j)}=\Omega _{A_j}$.
Since
\begin{equation*}
A_{x_1}^j=I_{H_1}\otimes\cdots\otimes I_{H_{j-1}}\otimes A_{jx_j}\otimes I_{H_{j+1}}\otimes\cdots\otimes I_{H_n}
\end{equation*}
we obtain
\begin{align*}
^{i\!\!}A_{x_j}^j&=\tfrac{1}{m_1\cdots m_{i-1}m_{i+1}\cdots m_n}\,\trace (I_{H_1})\cdots\trace (I_{H_{i-1}})\trace (I_{H_{i+1}})
   \cdots\trace (I_{H_n})\trace (A_{jx_j})I _{H_i}\\
   &=\tfrac{1}{m_j}\,\trace (A_{jx_j})I_{H_i}
\end{align*}
An observable of the form $B_{x_j}=\lambda _jI$ where $\lambda _j\in\sqbrac{0,1}$, $\sum\limits _j\lambda _j=1$ is called an \textit{identity observable} so
$^{i\!\!}A^j$ is an example of an identity observable. The $\rho$-distribution of $^{i\!\!}A^j$ is
\begin{equation*}
\Phi _{(\rho )}^{(^{i\!\!}A^j)}(x_j)=\tfrac{1}{m_j}\,\trace (A_{jx_j})\qquad\qed
\end{equation*}
\end{exam}

Generalizing what we did in Example~2, if $A\in\ob (H_1\otimes\cdots\otimes H_n)$ is an $n$-observable, we define the \textit{reduced} $n$-observable
$^{i\!\!}A\in\ob (H_i)$ with $\Omega _{(^{i\!\!}A)}=\Omega _A$ to be given by \eqref{eq22}.

If $H_,H_1$ are finite-dimensional Hilbert spaces, an \textit{operation} $\jscript$ from $H$ to $H_1$ is a trace non-increasing, completely positive, linear map
$\jscript\colon\lscript (H)\to\lscript (H_1)$ \cite{bgl95,dl70,hz12,nc00}. We denote the set of operations from $H$ to $H_1$ by $\oscript (H,H_1)$. If
$\jscript\in\oscript (H,H_1)$ preserves the trace we call $\jscript$ a channel \cite{hz12,hol82}. It can be shown that every $\jscript\in\oscript (H,H_1)$ has the form
$\jscript (B)=\sum\limits _{i=1}^nK_iBK_i^*$ where $K_i\colon H\to H_1$ are linear operators satisfying $\sum\limits _{i=1}^nK_i^*K_i\le I_H$
\cite{bgl95,hz12,kra83,nc00}. The operators $K_i$ are called \textit{Kraus operators} for $\jscript$. It is easy to show that $\jscript$ is a channel if and only if
$\sum\limits _{i=1}^nK_i^*K_i=I_H$. If $\jscript\in\oscript (H,H_1)$, its unique \textit{dual operation} $\jscript ^*\colon\lscript (H_1)\to\lscript (H)$ satisfies
$\trace\sqbrac{B\jscript ^*(C)}=\trace\sqbrac{C\jscript (B)}$ for all $B\in\lscript (H)$, $C\in\lscript (H_1)$ \cite{gud22}. It is easy to check that when
$\jscript (B)=\sum K_iBK_i^*$, then $\jscript ^*(C)=\sum K_i^*CK_i$ and $\jscript ^*\colon\escript (H_1)\to\escript (H)$. We say that $\jscript\in (H,H_1)$
\textit{measures} the effect $a\in\escript (H)$ if $\trace\sqbrac{\jscript (\rho )}=\trace (\rho a)$ for all $\rho\in\sscript (H)$. Since
\begin{equation*}
\trace\sqbrac{\rho\jscript ^*(I_{H_1})}=\trace\sqbrac{I_{H_1}\jscript (\rho )}=\trace\sqbrac{\jscript (\rho )}
\end{equation*}
for all $\rho\in\sscript (H)$, we see that $\jscript$ measures the unique effect $\jscript ^*(I_{H_1})$.

An \textit{instrument} $\iscript$ from $H$ to $H_1$ is a finite set of operations $\iscript =\brac{\iscript _x\colon x\in\Omega _\iscript}$ from $H$ to $H_1$ such that 
$\iscriptbar =\sum\limits _{x\in\Omega _\iscript}\iscript _x\in\oscript (H,H_1)$ is a channel. We call $\Omega _\iscript$ the \textit{outcome space} for $\iscript$ and denote the set of instruments from $H$ to $H_1$ by $\instr (H,H_1)$. If $H=H_1$, we write $\instr (H)$ for $\instr (H,H)$. For
$\Delta\subseteq\Omega _\iscript$ we write $\iscript (\Delta )=\sum\brac{\iscript _x\colon x\in\Delta}$ and call $\iscript$ an \textit{operation-valued measure} or a
\textit{positive operator-valued measure} (POVM) \cite{bgl95,dl70,hz12,nc00}. If $\iscript\in\instr (H,H_1)$, $\rho\in\sscript (H)$, the $\rho$-\textit{distribution} of
$\iscript$ is the probability measure on $\Omega _\iscript$ given by 
\begin{equation*}
\Phi _\rho ^\iscript (\Delta )=\trace\sqbrac{\iscript (\Delta )(\rho )}=\sum _{x\in\Delta}\trace\sqbrac{\iscript _x(\rho )}
\end{equation*}

An instrument $\iscript\in\instr (H,H_1)$ \textit{measures} a unique observable $\iscripthat\in\ob (H)$ given by $\Omega _{\iscripthat}=\Omega _\iscript$ where
$\trace (\rho\iscripthat _x)=\trace\sqbrac{\iscript _x(\rho )}$ for all $\rho\in\sscript (H)$. Since
$\trace\sqbrac{\iscript _x(\rho )}=\trace\sqbrac{\rho\iscript _x^*(I_{H_1})}$ for all $\rho\in\sscript (H)$, we see that $\iscripthat _x=\iscript _x^*(I_{H_1})$ for all
$x\in\Omega _{\iscripthat}$. The $\rho$-distribution of $\iscript$ becomes
\begin{equation*}
\Phi _\rho ^\iscript (\Delta )=\trace\sqbrac{\rho\iscript ^*(\Delta )(I_H)}=\trace\sqbrac{\rho\iscripthat\,(\Delta )}=\Phi _\rho ^{\iscripthat}(\Delta )
\end{equation*}
for all $\Delta\subseteq\Omega _\iscript$.

We now consider three examples of instruments. Let $\Omega _\kscript$ be a finite set and let $K_x\in\lscript (H,H_1)$, $x\in\Omega _\kscript$, with
$\sum\limits _{x\in\Omega _\kscript}K_x^*K_x=I$. The corresponding \textit{Krause instrument} $\kscript\in\instr (H,H_1)$ satisfies, $\kscript _x(B)=K_xBK_x^*$ and we call $K_x$ the \textit{Kraus operators} for $\kscript$ \cite{hz12,kra83}. Since $\kscript _x^*(C)=K_x^*CK_x$ for all $C\in\lscript (H_1)$ we conclude that the observable measured by $\kscript$ is
\begin{equation*}
\kscripthat _x=\kscript _x^*(I_{H_1})=K_x^*K_x
\end{equation*}
The $\rho$-distribution of $\kscript$ becomes
\begin{equation*}
\Phi _\rho ^\kscript (\Delta )=\trace\sqbrac{\rho\kscript (\Delta )}=\sum _{x\in\Delta}\trace\sqbrac{\kscript _x(\rho )}=\sum _{x\in\Delta}\trace (K_x\rho K_x^*)
   =\sum _{x\in\Delta}\trace (\rho K_x^*K_x)
\end{equation*}
As a special case of a Kraus instrument, let $A\in\ob (H)$ and define the instrument $\lscript ^{(A)}\in\instr (H)$ by $\lscript _x^{(A)}(B)=A_x^{1/2}BA_x^{1/2}$,
$x\in\Omega _A$. Then $\lscript ^{(A)}$ is called a L\"uders instrument \cite{gud120,lud51} and we see that $\lscript ^{(A)}$ is a Kraus instrument with Kraus operators $A _x^{1/2}$. The observable measured by $\lscript ^{(A)}$ is $A$. As a third example, let $A\in\ob (H)$ and let $\alpha _x\in\sscript (H_1)$,
$x\in\Omega _A$. We define the corresponding \textit{Holevo instrument} \cite{gud120,gud220,gud23,hol82}, $\hscript ^{(A,\alpha )}\in\instr (H,H_1)$ by
$\Omega _{\hscript ^{(A,\alpha )}}=\Omega _A$ and 
\begin{equation*}
\hscript _x^{(A,\alpha )}(B)=\trace (BA_x)\alpha _x
\end{equation*}
It follows that $\paren{\hscript _x^{(A,\alpha )^*}}(C)=\trace (C\alpha _x)A_x$. Hence
\begin{equation*}
\hscripthat _x^{(A,\alpha )}=\paren{\hscript _x^{(A,\alpha )}}^*(I_{H_1})=A_x
\end{equation*}
so $\hscript ^{(A,\alpha )}$ measures $A$.

As with observables, we define a \textit{multi-instrument} (or $n$-\textit{instrument}) $\iscript\in\instr (H,H_1)$ to have
$\Omega _\iscript =\Omega _1\times\cdots\times\Omega _n$ and we write the corresponding operations as $\iscript _{x_1\cdots x_n}$. If
$\iscript\in\instr (H,H_1)$ is an $n$-instrument, we define its $i$-\textit{marginal} to be the instrument $\iscript ^i\in\instr (H,H_1)$ given by
\begin{equation*}
\iscript _y^i(\rho )=\iscript\sqbrac{\Omega _\iscript ^{(x_i=y)}}(\rho )
\end{equation*}
for all $\rho\in\sscript (H)$, $i=1,2,\ldots ,n$. If $\iscript\in\instr (H,H_1\otimes\cdots\otimes H_n)$, we define the reduced instrument $^{i\!\!}\iscript\in\instr (H,H_1)$ with $\Omega _{^{(\i\!\!}A)}=\Omega _\iscript$ to be given by
\begin{equation*}
^{i\!\!}\iscript _x(\rho ) =\trace _{H_1}\cdots\trace _{H_{i-1}}\trace _{H_{i+1}}\cdots\trace _{H_n}\sqbrac{\iscript _x(\rho )}
\end{equation*}
Since
\begin{equation*}
\trace\sqbrac{\iscript _y^i(\rho )}=\trace\sqbrac{\iscript (\Omega _\iscript ^{(x_i=y)})(\rho )}=\trace\sqbrac{\rho\iscripthat (\Omega _\iscript ^{(x_i=y)})}
   =\trace (\rho\iscripthat _i^{\,i})
\end{equation*}
for all $\rho\in\sscript (H)$, we have that $(\iscript ^i)=\iscripthat^{\,i}$. Moreover, since $\trace\sqbrac{^{i\!\!}\iscript _x(\rho )}=\trace\sqbrac{\iscript _x(\rho )}$ we conclude that $(^{i\!\!}\iscript )^\wedge =\iscripthat$, $i=1,2,\ldots ,n$.

We say that a set of instruments $\iscript _i\in\instr (H,H_1)$, $i=1,2,\ldots ,n$
\textit{coexists} (is \textit{compatible}) if there exists a \textit{joint} $n$-instrument $\jscript\in\instr (H,H_1\otimes\cdots\otimes H_n)$ such that
$^{\i\!\!}\jscript ^i=\iscript _i$, $i=1,2,\ldots ,n$.

\begin{lem}    
\label{lem21}
If $\iscript _i\in\instr (H,H_i)$ coexist, then $\iscripthat _i\in\ob (H)$ coexist, $i=1,2,\ldots ,n$.
\end{lem}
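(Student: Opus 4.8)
The plan is to exhibit an explicit joint $n$-observable for the family $\iscripthat_i$, and the natural candidate is $\jscripthat$, the unique observable measured by the joint $n$-instrument that witnesses coexistence of the $\iscript_i$. Since the $\iscript_i$ coexist, there is a $\jscript\in\instr(H,H_1\otimes\cdots\otimes H_n)$ with ${}^{i\!\!}\jscript^{\,i}=\iscript_i$ for each $i$. I would set $C=\jscripthat\in\ob(H)$. Because $\Omega_C=\Omega_\jscript=\Omega_1\times\cdots\times\Omega_n$ is a product set, $C$ is an $n$-observable, so it suffices to prove that its marginals satisfy $C^i=\iscripthat_i$ for every $i$; by definition this is exactly the assertion that the $\iscripthat_i$ coexist, with $C$ as joint observable.

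The verification reduces to composing the two identities already recorded in this section for an arbitrary $n$-instrument: first, that a reduced instrument measures the same observable as the instrument it comes from, $\paren{{}^{i\!\!}\iscript}^\wedge=\iscripthat$; and second, that the observable measured by the $i$-marginal is the $i$-marginal of the measured observable, $\paren{\iscript^i}^\wedge=\iscripthat^{\,i}$. Applying the measured-observable assignment to $\iscript_i={}^{i\!\!}\jscript^{\,i}$ and invoking the marginal identity with the $n$-instrument ${}^{i\!\!}\jscript$ in place of $\iscript$ gives $\iscripthat_i=\paren{\paren{{}^{i\!\!}\jscript}^{i}}^\wedge=\paren{\paren{{}^{i\!\!}\jscript}^\wedge}^{i}$; the reduced-instrument identity then yields $\paren{{}^{i\!\!}\jscript}^\wedge=\jscripthat$, so that $\iscripthat_i=\jscripthat^{\,i}=C^i$. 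Since this holds for each $i=1,2,\ldots,n$, the observables $\iscripthat_i$ coexist.

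I do not expect a serious obstacle here: the entire content is the bookkeeping that the assignment $\iscript\mapsto\iscripthat$ intertwines the two operations appearing in the definition of instrument coexistence, namely reduction by partial trace and marginalization, with the single marginalization operation defining observable coexistence. The one point I would check carefully is that the general marginal identity $\paren{\iscript^i}^\wedge=\iscripthat^{\,i}$ is legitimately applied with $\iscript$ replaced by the reduced instrument ${}^{i\!\!}\jscript$, and that the outcome space $\Omega_i$ of $\iscripthat_i$ agrees with that of the marginal $\jscripthat^{\,i}$ so the equality of observables is meaningful; both follow immediately from the definitions.
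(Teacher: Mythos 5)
Your proposal is correct and takes essentially the same route as the paper: both exhibit $C=\jscripthat$ as the joint $n$-observable and verify $\jscripthat^{\,i}=\iscripthat_i$. The only cosmetic difference is that the paper inlines the underlying trace computations (written out for $n=2$, with the general case noted as similar), whereas you package them as the two identities $(\iscript^i)^\wedge=\iscripthat^{\,i}$ and $({}^{i\!\!}\iscript\,)^\wedge=\iscripthat$ that the paper establishes immediately before the lemma, which lets you handle general $n$ at once.
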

\begin{proof}
To keep the notation simple, we assume that $n=2$ and the general result is similar. Assume that $\iscript _i\in\instr (H,H_i)$, $i=1,2$, coexist and
$\jscript\in\instr (H,H_1\otimes H_2)$ is a joint instrument. Then $^{1\!\!}\jscript _i^1=\iscript _{1x}$ and $^{2\!\!}\jscript _y^2=\iscript _{2y}$ for all
$x\in\Omega _{\iscript _1}$,
$y\in\Omega _{\iscript _2}$. Hence,
\begin{align*}
\iscript _{1x}(\rho )=\trace _{H_2}\sqbrac{\sum _y\jscript _{xy}(\rho )}\\
\iscript _{2y}(\rho )=\trace _{H_1}\sqbrac{\sum _x\jscript _{xy}(\rho )}
\end{align*}
for all $\rho\in\sscript (H)$. Now $\jscripthat\in\ob (H)$ is a bi-observable satisfying 
\begin{equation*}
    \trace (\rho\jscripthat _{xy})=\trace\sqbrac{\jscript _{xy}(\rho )}
\end{equation*}
for all $x\in\Omega _{\iscript _1}$, $y\in\Omega _{\iscript _2}$, $\rho\in\sscript (H)$. Since
\begin{equation*}
\trace\paren{\rho\sum _y\jscripthat _{xy}}=\trace\sqbrac{\sum _y\jscript _{xy}(\rho )}=\trace\sqbrac{\trace _{H_2}\sum _y\jscript _{xy}(\rho )}
   =\trace\sqbrac{\iscript _{1x}(\rho )}=\trace (\rho\iscripthat _{1x})
\end{equation*}
for all $\rho\in\sscript (H)$ we have that $(\jscripthat\,)_x^1=\iscripthat _{1x}$ and similarly $(\jscripthat\,)_y^2=\iscripthat _{2y}$. Hence,
$\iscripthat _1,\iscripthat _2$ coexist.
\end{proof}

It can be shown that the converse of Lemma~\ref{lem21} does not hold \cite{hz12,nc00}. An instrument $\jscript\in\instr (H,H_1)$ is \textit{part} of an instrument
$\iscript\in\instr (H,H_1)$ if there exists a surjection $f\colon\Omega _\iscript\to\Omega _\jscript$ such that \cite{gud220,hz12}
\begin{equation*}
\jscript _y=\iscript\sqbrac{f^{-1}(y)}=\sum\brac{\iscript _x\colon f(x)=y}
\end{equation*}
and we then write $\jscript =f(\iscript )$. For example, if $\hscript ^{(A,\alpha )}\in\instr (H,H_1)$ is Holevo with $\alpha =\brac{\alpha}$, then
\begin{align*}
f\paren{\hscript ^{(A,\alpha )}}_y(\rho )&=\sum\brac{\hscript _x^{(A,\alpha )}(\rho )\colon f(x)=y}=\sum\brac{\trace (\rho A_x)\alpha\colon f(x)=y}\\
   &=\trace\sqbrac{\rho\sum _x\brac{A_x\colon f(x)=y}}\alpha =\trace\sqbrac{\rho f(A)_y}\alpha =\hscript _y^{(f(A),\alpha )}(\rho )
\end{align*}
Hence, $f\paren{\hscript ^{(A,\alpha )}}=\hscript ^{(f(A),\alpha )}$. In general, we have
\begin{align*}
f(\iscript )_y^\wedge&=f(\iscript )_y^*(I_{H_1})=\sqbrac{\sum\brac{\iscript _x\colon f(x)=y}}^*(I_{H_1})=\sum\brac{\iscript _x^*\colon f(x)=y}(\iscript _{H_1})\\
   &=\sum\brac{\iscripthat _x\colon f(x)=y}=f(\,\iscripthat\,)_y
\end{align*}
so it follows that $f(\iscript )^\wedge =f(\,\iscripthat\,)$. It also follows that $f(\iscript )^*=f(\iscript ^*)$.

\section{Results}  
Our first result shows there us a close relationship between a nontrivial $n$-observable and its parts.

\begin{thm}    
\label{thm31}
$A$ is a nontrivial $n$-observable if and only if $A$ has $n$ parts $B_i=f_i(A)$ such that $\ab{f_i(\Omega _A)}\ge 2$ and
$\ab{\bigcap\limits _{i=1}^nf_i^{-1}(x_i)}=1$ for all $x_i\in\Omega _{B_i}$, $i=1,2,\ldots ,n$. In this case, $B_i$ is the $i$th marginal $B_i=A^i$, $i=1,2,\ldots ,n$.
\end{thm}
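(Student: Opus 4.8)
The plan is to prove both directions of the biconditional by translating the combinatorial conditions on the maps $f_i$ into the product-set structure of $\Omega_A$.

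For the forward direction, I would assume $A$ is a nontrivial $n$-observable, so I may take $\Omega_A=\Omega_1\times\cdots\times\Omega_n$ with $\ab{\Omega_i}\ge 2$. I would set $B_i=A^i=f_i(A)$, where $f_i\colon\Omega_A\to\Omega_i$ is the coordinate projection $f_i\paren{(x_1,\ldots,x_n)}=x_i$ already introduced in equation~\eqref{eq21}. Then $\ab{f_i(\Omega_A)}=\ab{\Omega_i}\ge 2$ is immediate since $f_i$ is surjective onto $\Omega_i$. The key computation is the intersection condition: for a choice $x_i\in\Omega_i$ of each coordinate, $f_i^{-1}(x_i)$ is the set of tuples whose $i$th entry is $x_i$, and $\bigcap_{i=1}^n f_i^{-1}(x_i)$ is exactly the singleton $\brac{(x_1,\ldots,x_n)}$, so its cardinality is $1$. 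This direction is essentially bookkeeping once the projections are named.

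For the converse, I would start from the hypothesis that $A$ has parts $B_i=f_i(A)$ with $\ab{f_i(\Omega_A)}\ge 2$ and $\ab{\bigcap_{i=1}^n f_i^{-1}(x_i)}=1$ for all choices $x_i\in\Omega_{B_i}$. The aim is to construct a bijection $h\colon\Omega_A\to\Omega_{B_1}\times\cdots\times\Omega_{B_n}$ exhibiting $A$ as a nontrivial $n$-observable. The natural candidate is $h(x)=\paren{f_1(x),\ldots,f_n(x)}$. Surjectivity onto the product follows because each target tuple $(x_1,\ldots,x_n)$ has a nonempty preimage, namely the intersection $\bigcap_i f_i^{-1}(x_i)$, which by hypothesis is a singleton hence nonempty. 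Injectivity is the crux: if $h(x)=h(x')=(x_1,\ldots,x_n)$ then both $x$ and $x'$ lie in $\bigcap_i f_i^{-1}(x_i)$, and since that set has exactly one element, $x=x'$. Thus $h$ is a bijection, $\ab{\Omega_{B_i}}=\ab{f_i(\Omega_A)}\ge 2$, and $A$ is a nontrivial $n$-observable; the final clause $B_i=A^i$ then follows from identifying $f_i$ with the $i$th coordinate projection under $h$, exactly as in \eqref{eq21}.

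The main obstacle, and the point deserving most care, is verifying that the intersection condition simultaneously delivers both surjectivity (nonemptiness of every fiber intersection) and injectivity (uniqueness within each intersection) of $h$ — these are precisely the two halves of the cardinality-one hypothesis, so the argument hinges on reading $\ab{\bigcap_i f_i^{-1}(x_i)}=1$ as ``nonempty and single-valued.'' A secondary subtlety is confirming that under the identification $h$ the abstract parts $f_i$ really coincide with the marginal maps, so that $B_i=A^i$ holds as observables and not merely as abstractly isomorphic objects; this is where I would invoke the marginal computation of \eqref{eq21} to close the statement.
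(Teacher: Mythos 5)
Your proposal is correct and follows essentially the same route as the paper's own proof: the forward direction via the coordinate projections $f_i$ and the singleton fiber intersection, and the converse via the map $h(x)=\paren{f_1(x),\ldots ,f_n(x)}$, with surjectivity and injectivity extracted from the nonemptiness and uniqueness halves of $\ab{\bigcap _{i=1}^nf_i^{-1}(x_i)}=1$. Your closing step, identifying $f_i$ with the $i$th coordinate projection under $h$ and invoking the marginal computation of \eqref{eq21} to get $B_i=A^i$, is exactly how the paper finishes as well.
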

\begin{proof}
Suppose $A$ is a nontrivial $n$-observable in which case we can assume that $\Omega _A=\Omega _1\times\cdots\times\Omega _n$ where
$\ab{\Omega _i}\ge 2$, $i=1,2,\ldots ,n$. Let $B_i=A^i$, $i=1,2,\ldots ,n$, be the $i$th marginal. For $i=1,2,\ldots ,n$, define the surjective
$f_i\colon\Omega _A\to\Omega _i$ by $f_i(x_1,x_2,\ldots ,x_n)=x_i$. As in \eqref{eq21} we have
\begin{equation*}
f_i(A)(y_i)=A_{y_i}^i=B_{iy_i}
\end{equation*}
Hence, $B_i=A^i=f_i(A)$ is part of $A$, $i=1,2,\ldots ,n$. Also,
\begin{equation*}
\ab{\bigcap _{i=1}^nf_i^{-1}(\Omega _A)}=\ab{\brac{(x_1,\ldots ,x_n)}}=1
\end{equation*}
and $\ab{f_i(\Omega _A)}=\ab{\Omega _i}\ge 2$, $i=1,2,\ldots ,n$. Conversely, suppose $A$ has $n$ parts $B_i=f_i(A)$ such that $\ab{f_i(A)}\ge 2$ and
$\ab{\bigcap\limits _{i=1}^nf _i^{-1}(x_i)}=1$ for all $x_i\in\Omega _{B_i}$, $i=1,2,\ldots ,n$. Define
$h\colon\Omega _A\to\Omega _{B_1}\times\cdots\times\Omega _{B_n}$ by $h(x)=\paren{f_1(x),\ldots ,f_n(x)}$. If
$(x_1,\ldots ,x_n)\in\Omega _{B_1}\times\cdots\times\Omega _{B_n}$, since $\bigcap\limits _{i=1}^nf_i^{-1}(x_i)\ne\emptyset$ there exists an $x\in\Omega _A$ such that $f_i(x)=x_i$, $i=1,2,\ldots ,n$. Hence, $h(x)=(x_1,\ldots ,x_n)$ so $h$ is surjective. If $x,y\in\Omega _A$ with $h(x)=h(y)$, then $f_i(x)=f_i(y)$ for all $i=1,2,\ldots ,n$. Letting $x_i=f_i(x)=f_i(y)$ we obtain $x,y\in f_i^{-1}(x_i)$ for all $i=1,2,\ldots ,n$, so $x,y\in\bigcap\limits _{i=1}^nf_i^{-1}(x_i)$. Since
$\ab{\bigcap\limits _{i=1}^nf_i^{-1}(x_i)}=1$, we conclude that $x=y$. Hence, $h$ is injective so $h$ is bijective. Since $\ab{\Omega _{B_i}}=f_i(\Omega _A)\ge 2$, we have that $A$ is a nontrivial $n$-observable. Finally, we have the $i$th marginal
\begin{align*}
A_y^i&=\sum\brac{A_{x_1\cdots x_{i-1}yx_{i+1}\cdots x_n}\colon x_1,\ldots ,x_{i-1},x_{i+1},\ldots ,x_n}=\sum\brac{A_x\colon f_i(x)=y}\\
   &=f_i (A)_y=B_{iy}
\end{align*}
so $A_y^i=B_i$, $i=1,2,\ldots ,n$.
\end{proof}

A similar proof shows that Theorem~\ref{thm31} holds for $n$-instruments as well. If $\iscript _i\in\oscript (H_i,H'_i)$, $i=1,2,\ldots ,n$, we define the
\textit{tensor product} $\kscript =\iscript _1\otimes\cdots\otimes\iscript _n$ to be the operation
\begin{equation*}
\kscript\in\oscript (H_1\otimes\cdots\otimes H_n,H'_1\otimes\cdots\otimes H'_n)
\end{equation*}
that satisfies
\begin{equation*}
\kscript (B_1\otimes\cdots\otimes B_n)=\iscript _1(B_1)\otimes\cdots\otimes\iscript _n(B_n)
\end{equation*}
for all $B_i\in\lscript (H_i)$, $i=1,2,\ldots ,n$. It is not clear that the operation $\kscript$ exists. This is remedied by the next theorem.

\begin{thm}    
\label{thm32}
The operation $\kscript$ exists.
\end{thm}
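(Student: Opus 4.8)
The plan is to build $\kscript$ explicitly from Kraus representations of the factors and then verify directly that it is an operation with the asserted product property; this avoids having to settle any consistency question about extending a map defined only on product operators. By the Kraus representation recalled in Section~2, for each $i$ write $\iscript _i(B)=\sum _{k_i}K_{ik_i}BK_{ik_i}^*$, where the operators $K_{ik_i}\colon H_i\to H'_i$ satisfy $\sum _{k_i}K_{ik_i}^*K_{ik_i}\le I_{H_i}$. I would then \emph{define}
\begin{equation*}
\kscript (B)=\sum _{k_1,\ldots ,k_n}\paren{K_{1k_1}\otimes\cdots\otimes K_{nk_n}}B\paren{K_{1k_1}\otimes\cdots\otimes K_{nk_n}}^*
\end{equation*}
for all $B\in\lscript (H_1\otimes\cdots\otimes H_n)$. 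Presented this way, $\kscript$ is manifestly a completely positive linear map, since it already has Kraus form with Kraus operators $K_{1k_1}\otimes\cdots\otimes K_{nk_n}$, and it is defined on the whole space at once.

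Next I would check that $\kscript$ is trace non-increasing, hence an operation. Using $(S\otimes T)^*=S^*\otimes T^*$ and $(S_1\otimes T_1)(S_2\otimes T_2)=S_1S_2\otimes T_1T_2$, the relevant sum of $K^*K$ factors as
\begin{equation*}
\sum _{k_1,\ldots ,k_n}\paren{K_{1k_1}\otimes\cdots\otimes K_{nk_n}}^*\paren{K_{1k_1}\otimes\cdots\otimes K_{nk_n}}
   =\paren{\sum _{k_1}K_{1k_1}^*K_{1k_1}}\otimes\cdots\otimes\paren{\sum _{k_n}K_{nk_n}^*K_{nk_n}}
\end{equation*}
Each factor is a positive operator $\le I_{H_i}$, so it remains to observe that a tensor product of such factors is itself $\le I_{H_1}\otimes\cdots\otimes I_{H_n}=I$; granting this gives $\sum (\cdot )^*(\cdot )\le I$ and thus $\kscript\in\oscript (H_1\otimes\cdots\otimes H_n,H'_1\otimes\cdots\otimes H'_n)$.

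Finally I would verify the defining identity on product operators. For $B_i\in\lscript (H_i)$ the sum separates over the independent indices $k_1,\ldots ,k_n$, giving
\begin{equation*}
\kscript (B_1\otimes\cdots\otimes B_n)=\sum _{k_1,\ldots ,k_n}\paren{K_{1k_1}B_1K_{1k_1}^*}\otimes\cdots\otimes\paren{K_{nk_n}B_nK_{nk_n}^*}
   =\iscript _1(B_1)\otimes\cdots\otimes\iscript _n(B_n)
\end{equation*}
Since the product operators span $\lscript (H_1\otimes\cdots\otimes H_n)$ and $\kscript$ is linear, this identity pins down $\kscript$ uniquely and shows the construction is independent of the chosen Kraus representations.

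The only genuinely non-formal step is the operator inequality in the trace check, namely that a tensor product of positive contractions is a positive contraction. I would justify it by the telescoping identity
\begin{equation*}
I-A_1\otimes\cdots\otimes A_n=\sum _{j=1}^nA_1\otimes\cdots\otimes A_{j-1}\otimes (I_{H_j}-A_j)\otimes I_{H_{j+1}}\otimes\cdots\otimes I_{H_n}
\end{equation*}
valid whenever $0\le A_i\le I_{H_i}$; each summand is a tensor product of positive operators and hence positive, so the whole right-hand side is positive. Everything else is routine bookkeeping with tensor-product algebra.
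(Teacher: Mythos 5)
Your proposal is correct and takes essentially the same route as the paper: both construct $\kscript$ explicitly from the tensor-product Kraus operators $K_{1k_1}\otimes\cdots\otimes K_{nk_n}$, factor the sum $\sum\paren{\cdot}^*\paren{\cdot}$ into $\paren{\sum_{k_1}K_{1k_1}^*K_{1k_1}}\otimes\cdots\otimes\paren{\sum_{k_n}K_{nk_n}^*K_{nk_n}}$, and verify the product identity on operators $B_1\otimes\cdots\otimes B_n$. Your two additions --- the telescoping-identity proof that a tensor product of positive contractions is $\le I$ (which the paper asserts without justification) and the observation that linearity plus the spanning of product operators makes $\kscript$ independent of the chosen Kraus representations --- are both correct and mildly strengthen the paper's argument rather than deviating from it.
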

\begin{proof}
Suppose that $\iscript _i$ has the Kraus decomposition $\iscript _i(B)=\sum\limits _{j_i}K_{j_i}^iBK_{j_i}^{i*}$ for all $B\in\lscript (H_i)$ and that
$\sum\limits _{j_i}K_{j_i}^{i*}K_{j_i}^i\le I_{H_i}$, $i=1,2,\ldots ,n$. Then for $C\in\lscript (H_1\otimes\cdots\otimes H_n)$ we define
\begin{equation*}
\kscript (C)=\sum _{j_1,\ldots ,j_n}K_{j_1}^1\otimes\cdots\otimes K_{j_n}^nCK_{j_1}^{1*}\otimes\cdots\otimes K_{j_n}^{n*}
\end{equation*}
Now $\kscript\in\oscript (H_1\otimes\cdots\otimes H_n,H'_1\otimes\cdots\otimes H'_n)$ because $\kscript$ has a Kraus decomposition with 
\begin{align*}
\sum _{j_1,\ldots ,j_n}(K_{j_1}^{1*}\otimes\cdots\otimes K_{j_n}^{n*})(K_{j_1}^1\otimes\cdots\otimes K_{j_n}^n)
   &=\sum _{j_1}K_{j_1}^{1*}K_{j_1}^1\otimes\cdots\otimes\sum _{j_n}K_{j_n}^{n*}K_{jn}^n\\
   &\le I_{H_1}\otimes\cdots\otimes I_{H_n}
\end{align*}
Finally, $\kscript$ satisfies
\begin{align*}
\kscript (B_1\otimes\cdots\otimes B_n)&=\sum _{j_1,\ldots ,j_n}K_{j_1}^1\otimes\cdots\otimes K_{j_n}^n(B_1\otimes\cdots\otimes B_n)
   K_{j_1}^{1*}\otimes\cdots\otimes K_{j_n}^{n*}\\
   &=\sum _{j_1,\ldots ,j_n}K_{j_1}^1B_1K_{j_1}^{1*}\otimes\cdots\otimes K_{j_n}^nB_nK_{j_n}^{n*}\\
   &=\sum _{j_1}K_{j_1}^1B_1K_{j_1}^{1*}\otimes\cdots\otimes\sum _{j_n}K_{j_n}^nB_nK_{j_n}^{n*}\\
   &=\iscript _1(B_1)\otimes\cdots\otimes\iscript _n(B_n)
\end{align*}
for all $B_i\in\lscript (H_i)$, $i=1,2,\ldots ,n$.
\end{proof}

If $\iscript _i\in\instr (H_i,H'_i)$, $i=1,2,\ldots ,n$, define the \textit{tensor product}
\begin{equation*}
\kscript\in\instr (H_1\otimes\cdots\otimes H_n,H'_1\otimes\cdots\otimes H'_n)
\end{equation*}
to be the $n$-instrument given by
\begin{equation*}
\kscript _{x_1\cdots x_n}(\rho )=\iscript _{x_1}\otimes\cdots\otimes\iscript _{x_n}(\rho )
\end{equation*}
for all $\rho\in\sscript (H_1\otimes\cdots\otimes H_n)$. We have seen that
\begin{equation*}
\kscript _{x_1\cdots x_n}\in\oscript (H_1\otimes\cdots\otimes H_n,H'_1\otimes\cdots\otimes H'_n)
\end{equation*}
and $\kscriptbar$ is a channel because $\kscriptbar =\iscriptbar _1\otimes\cdots\otimes\iscriptbar _n$ and $\iscriptbar _i$ are channels, $i=1,2,\ldots ,n$. The next result shows that $\iscript _i\otimes\cdots\otimes\iscript _n$ is a type of joint instrument for $\iscript _i$, $i=1,2,\ldots ,n$ even though it is not strong enough to provide coexistence.

\begin{thm}    
\label{thm33}
Let $\iscript _i\in\instr (H_i,H'_i)$, $i=1,2,\ldots ,n$, and let $\kscript =\iscript _1\otimes\cdots\otimes\iscript _n$.
{\rm{(i)}}\enspace $\kscripthat _{x_1\cdots x_n}=\iscripthat _{1x_1}\otimes\cdots\otimes\iscripthat _{nx_n}$.
{\rm{(ii)}}\enspace For all $\rho\in\sscript (H_1\otimes\cdots\otimes H_n)$ we have
\begin{align*}
^{1\!}\kscript _{x_1}^1(\rho )&=\iscript _{1x_1}\sqbrac{\trace _{H_2}\cdots\trace _{H_n}(\rho )}\\
   &\ \vdots\\
   ^{n\!}\kscript _{x_n}^n(\rho )&=\iscript _{nx_n}\sqbrac{\trace _{H_1}\cdots\trace _{H_{n-1}}(\rho )}
\end{align*}
{\rm{(iii)}}\enspace For all $\rho _i\in\sscript (H_i)$, letting $m_i=\dim H_i$ we have
\begin{align*}
\tfrac{1}{m_2\cdots m_n}^{1\!}\kscript _{x_1}^1(\rho _1\otimes I_{H_2}\otimes\cdots\otimes I_{H_n})&=\iscript _{1x_1}(\rho _1)\\
\vdots\hskip 10 pc&\ \vdots\\
\tfrac{1}{m_1m_2\cdots m_{n-1}}^{n\!}\kscript _{x_n}^n(I_{H_1}\otimes\cdots\otimes I_{H_{n-1}}\otimes\rho _n)&=\iscript _{nx_n}(\rho _n)
\end{align*}
\end{thm}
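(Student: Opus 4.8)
The plan is to treat the three parts in turn, deriving (iii) as an immediate corollary of (ii); the only substantive ingredient is a partial-trace-through-a-channel identity needed for (ii).

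For (i), I would begin from the definition $\kscripthat_{x_1\cdots x_n}=\kscript_{x_1\cdots x_n}^*(I_{H'_1\otimes\cdots\otimes H'_n})$ and invoke the Kraus form produced in the proof of Theorem~\ref{thm32}. Since $\kscript_{x_1\cdots x_n}=\iscript_{1x_1}\otimes\cdots\otimes\iscript_{nx_n}$ has Kraus operators $K^1_{j_1}\otimes\cdots\otimes K^n_{j_n}$, the rule $\jscript^*(C)=\sum K_i^*CK_i$ gives $\kscript_{x_1\cdots x_n}^*=\iscript_{1x_1}^*\otimes\cdots\otimes\iscript_{nx_n}^*$. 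Evaluating at $I_{H'_1\otimes\cdots\otimes H'_n}=I_{H'_1}\otimes\cdots\otimes I_{H'_n}$ and using $\iscripthat_{ix_i}=\iscript_{ix_i}^*(I_{H'_i})$ then yields (i) at once.

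For (ii), I would first note that, because both summation and partial trace are linear, the marginal of the reduced instrument equals the reduction of the marginal; in particular $^{1\!}\kscript_{x_1}^1$ is obtained by summing $^{1\!}\kscript_{x_1x_2\cdots x_n}$ over $x_2,\ldots,x_n$. Multilinearity of the tensor product together with $\sum_{x_j}\iscript_{jx_j}=\iscriptbar_j$ being a channel then collapses this to $^{1\!}\kscript_{x_1}^1(\rho)=\trace_{H'_2}\cdots\trace_{H'_n}[(\iscript_{1x_1}\otimes\iscriptbar_2\otimes\cdots\otimes\iscriptbar_n)(\rho)]$. The crux, and the step I expect to require the most care to state cleanly, is the identity
\[
\trace_{H'_2}\cdots\trace_{H'_n}[(\jscript\otimes\Phi)(\rho)]=\jscript[\trace_{H_2}\cdots\trace_{H_n}(\rho)],
\]
valid for any operation $\jscript$ on the first factor and any channel $\Phi=\iscriptbar_2\otimes\cdots\otimes\iscriptbar_n$. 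I would verify it on product operators $\rho=a\otimes b$: the left side is $\jscript(a)\trace[\Phi(b)]=\jscript(a)\trace(b)$ since $\Phi$ preserves trace, while the right side is $\jscript[a\,\trace(b)]=\jscript(a)\trace(b)$; linearity extends it to all $\rho$. This establishes the first line of (ii), and the remaining lines follow by permuting indices.

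For (iii), I would substitute $\rho=\rho_1\otimes I_{H_2}\otimes\cdots\otimes I_{H_n}$ into the formula just proved. Since $\trace_{H_2}\cdots\trace_{H_n}(\rho_1\otimes I_{H_2}\otimes\cdots\otimes I_{H_n})=\trace(I_{H_2})\cdots\trace(I_{H_n})\rho_1=m_2\cdots m_n\,\rho_1$, linearity of $\iscript_{1x_1}$ gives $^{1\!}\kscript_{x_1}^1(\rho_1\otimes I_{H_2}\otimes\cdots\otimes I_{H_n})=m_2\cdots m_n\,\iscript_{1x_1}(\rho_1)$, and dividing by $m_2\cdots m_n$ yields the claim; the other lines are identical up to relabeling. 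The genuinely new content is confined to the channel/partial-trace identity in (ii), the rest being linearity and bookkeeping.
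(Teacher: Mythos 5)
Your proposal is correct, and for parts (ii) and (iii) it coincides with the paper's own proof: the paper likewise commutes the sum over $x_2,\ldots ,x_n$ past the partial traces, collapses it to $\iscript _{1x_1}\otimes\iscriptbar _2\otimes\cdots\otimes\iscriptbar _n$, verifies the resulting partial-trace/channel identity on product operators $B_1\otimes\cdots\otimes B_n$ using trace preservation of the $\iscriptbar _j$, extends by linearity, and then derives (iii) by substituting the normalized state $\rho _1\otimes\tfrac{I_{H_2}}{m_2}\otimes\cdots\otimes\tfrac{I_{H_n}}{m_n}$ into (ii) --- your direct substitution of the unnormalized $\rho _1\otimes I_{H_2}\otimes\cdots\otimes I_{H_n}$ is the same computation, justified, as you note, by linearity of both sides of (ii) in $\rho$. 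The one genuine difference is (i): the paper avoids Kraus operators there, instead pairing against product operators via $\trace (B\kscripthat _{x_1\cdots x_n})=\trace\sqbrac{\kscript _{x_1\cdots x_n}(B)}$, factoring the trace into $\trace (B_1\iscripthat _{1x_1})\cdots\trace (B_n\iscripthat _{nx_n})$, and reassembling, whereas you compute the dual map itself, reading off $\kscript _{x_1\cdots x_n}^*=\iscript _{1x_1}^*\otimes\cdots\otimes\iscript _{nx_n}^*$ from the product Kraus operators constructed in the proof of Theorem~\ref{thm32} and evaluating at $I_{H'_1}\otimes\cdots\otimes I_{H'_n}$. Your route is sound --- the dual is uniquely determined by the trace duality, so computing it from one particular Kraus decomposition is harmless --- and it buys the slightly stronger fact that duals tensor-factor as maps, not merely at the identity; the paper's route buys independence from any choice of decomposition and from the construction in Theorem~\ref{thm32}. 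One small point to make explicit in a final write-up: your step (i), the multilinearity claim in (ii), and the extension of your partial-trace lemma from $a\otimes b$ to general $\rho$ all rest on the fact that every operator on a tensor product of finite-dimensional spaces is a sum of product operators; the paper states this once in its proof of (i), and stating it once would cover all the linearity extensions you invoke.
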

\begin{proof}
(i)\enspace For all $B=B_1\otimes\cdots\otimes B_n\in\lscript (H_1\otimes\cdots\otimes H_n)$ we obtain
\begin{align*}
\trace (B\kscripthat _{x_1\cdots x_n})&=\trace\sqbrac{\kscript _{x_1\cdots x_n}(B)}
   =\trace\sqbrac{\iscript _{1x_1}\otimes\cdots\otimes\iscript _{nx_n}(B_1\otimes\cdots\otimes B_n)}\\
   &=\trace\sqbrac{\iscript _{1x_1}(B_1)\otimes\cdots\otimes\iscript _{nx_n}(B_n)}=\trace\sqbrac{\iscript _{1x_1}(B_1)}\cdots\trace\sqbrac{\iscript _{nx_n}(B_n)}\\
   &=\trace (B_1\iscripthat _{1x_1})\cdots\trace (B_n\iscripthat _{nx_n})=\trace (B_1\iscripthat _{1x_1}\otimes\cdots\otimes B_n\iscripthat _{nx_n})\\
   &=\trace(B_1\otimes\cdots\otimes B_n\iscripthat _{1x_1}\otimes\cdots\otimes\iscripthat _{nx_n})
   =\trace (B\iscripthat _{1x_1}\otimes\cdots\otimes\iscripthat _{nx_n})
\end{align*}
Since any $A\in\lscript (H_1\otimes\cdots\otimes H_n)$ has the product form
\begin{equation*}
A=\sum _{i_1,\ldots ,i_n}(A_{i_1}\otimes\cdots\otimes A_{i_n})
\end{equation*}
with $A_{i_j}\in\lscript (H_j)$ the result holds for any $B\in\lscript (H_1\otimes\cdots\otimes H_n)$. Hence, (i) holds.\newline
(ii)\enspace As in (i), letting $B=B_1\otimes\cdots\otimes B_n\in\lscript (H_1\otimes\cdots\otimes H_n)$ we obtain
\begin{align*}
^{1\!}\kscript _{x_1}^1(B)&=\trace _{H'_2}\cdots\trace _{H'_n}\sqbrac{\sum _{x_2,\ldots ,x_n}\kscript _{x_1\cdots x_n}(B)}\\
   &=\trace _{H'_2}\cdots\trace _{H'_n}\sqbrac{\sum _{x_2,\ldots x_n}\iscript _{1x_1}\otimes\cdots\otimes\iscript _{nx_n}(B_1\otimes\cdots\otimes B_n)}\\
   &=\trace _{H'_2}\cdots\trace _{H'_n}\sqbrac{\sum _{x_2,\ldots ,x_n}\iscript _{1x_1}(B_1)\otimes\cdots\otimes\iscript _{nx_n}(B_n)}\\
   &=\trace _{H'_2}\cdots\trace _{H'_n}\sqbrac{\iscript _{1x_1}(B_1)\otimes\iscriptbar _2(B_2)\otimes\iscriptbar _n(B_n)}\\
   &=\iscript _{1x_1}(B_1)\trace _{H_2}(B_2)\cdots\trace _{H_n}(B_n)=\iscript _{1x_1}\sqbrac{\trace _{H_2}\cdots\trace _{H_n}(B_1\otimes\cdots\otimes B_n)}\\
   &=\iscript _{1x_1}\sqbrac{\trace _{H_2}\cdots\trace _{H_n}(B)}
\end{align*}
Since any $\rho\in\sscript (H_1\otimes\cdots\otimes H_2)$ has form as in (i) we conclude that (ii) holds.\newline 
(iii) Letting
\begin{equation*}
\rho =\rho _1\otimes\frac{I_{H_2}}{m_2}\otimes\cdots\otimes\frac{I_{H_n}}{m_n}
\end{equation*}
in (ii), we obtain
\begin{align*}
^{1\!}\kscript _{x_1}^1(\rho _1\otimes I_{H_2}\otimes\cdots\otimes I_{H_n})
   &=\iscript _{1x_1}\sqbrac{\trace _{H_2}\cdots\trace _{H_n}(\rho _1\otimes I_{H_2}\otimes\cdots\otimes I_{H_n})}\\
   &=\trace (I_{H_2})\cdots\trace (I_{H_n})\iscript _{1x_1}(\rho _1)\\
   &=m_2\cdots m_n\iscript _{1x_1}(\rho _1)
\end{align*}
Hence, the result holds for $\iscript _{1x_1}(\rho _1)$. The result holds for $\iscript _{2x_2}(\rho _2),\ldots ,\iscript _{nx_n}(\rho _n)$ in a similar way.
\end{proof}

If $\iscript _i\in\instr (H_i,H_{i+1})$, $i=1,2,\ldots ,n$, we define their \textit{sequential product} to be the $n$-instrument \cite{gn01,gud220,gud22}
\begin{equation*}
\iscript =\iscript _1\circ\cdots\circ\iscript _n\in\instr (H_1,H_{n+1})
\end{equation*}
given by $\Omega _\iscript =\Omega _{\jscript _1}\times\cdots\times\Omega _{\iscript _n}$ with
\begin{equation*}
\iscript _{x_1\cdots x_n}(\rho )=\iscript _{nx_n}\cdots\iscript _{1x_1}(\rho )
\end{equation*}
For example, if $\kscript _i\in\iscript (H_i,H_{i+n})$, $i=1,2,\ldots ,n$, are Kraus instruments with Kraus operators $K_{x_i}^i$, then 
\begin{equation*}
\kscript =\kscript _1\circ\ldots\circ\kscript _n\in\instr (H_1,H_{n+1})
\end{equation*}
is a Kraus $n$-instrument given by
\begin{equation*}
\kscript _{x_1\ldots x_n}(\rho )=K_{x_n}^n\ldots K_{x_1}^1\rho K_{x_1}^{1*}\ldots K_{x_n}^{n*}
\end{equation*}
so the Kraus operators for $\kscript$ are $K_{x_n}^n\ldots K_{x_1}^1$.

\begin{exam}{3}  
Let $\hscript ^{(A,\alpha )}\in\instr (H_1,H_2)$, $\hscript ^{(B,\beta )}\in\instr (H_2,H_3)$ be Holevo instruments and let
$\iscript =\hscript ^{(A,\alpha )}\circ\hscript ^{(B,\beta )}$ be their sequential product. Then for all $\rho\in\sscript (H)$ we have
\begin{align*}
\iscript _{x_1x_2}(\rho )&=\hscript _{x_2}^{(B,\beta )}\sqbrac{\hscript _{x_1}^{(A,\alpha )}(\rho )}
    =\hscript _{x_2}^{(B,\rho )}\sqbrac{\trace (\rho A_{x_1})\alpha _{x_1}}=\trace (\rho A_{x_1})\hscript _{x_2}^{(B,\rho )}(\alpha _{x_1})\\
    &=\trace (\rho A_1)\trace (\alpha _{x_1}B_{x_2})\beta _{x_2}=\trace\sqbrac{\rho \paren{\trace (\alpha _{x_1}B_{x_2})A_{x_1}}}\beta _{x_2}\\
    &=\hscript _{x_1x_2}^{(C,\beta )}(\rho )
\end{align*}
where $C_{x_1x_2}\in\ob (H_1)$ is the bi-observable given by $C_{x_1x_2}=\trace (\alpha _{x_1}B_{x_2})A_{x_1}$. We conclude that $\iscript$ is the Holevo instrument $\hscript ^{(C,\beta )}$ that measures the bi-observable $C$. It follows by induction that the sequential product of a finite number of Holevo instruments is Holevo.\quad\qedsymbol
\end{exam}

The sequential product $\iscript =\iscript _1\circ\cdots\circ\iscript _n$ measures the $n$-observable $\iscripthat\in\ob (H_1)$ given by
\begin{align*}
\iscripthat _{x_1\cdots x_n}&=\iscript _{x_1\cdots x_n}^*(I_{H_{n+1}})=\iscript _{1x_1}^*\paren{\iscript _{2x_2}^*\cdots\iscript _{nx_n}^*(I_{H_{n+1}})}\\
   &=\iscript _{1x_1}^*(\iscript _{2x_2}^*\cdots\iscript _{(n-1)x_{n-1}}(\,\iscripthat _n))
\end{align*}
For example, if $\iscript =\hscript ^{(A,\alpha )}\circ \hscript ^{(B,\beta )}$ as in Example~3, we obtain
\begin{align*}
\iscripthat _{x_1x_2}&=\iscript _{x_1x_2}^*(I_{H_3})=\hscript _{x_1}^{(A,\alpha )^*}\sqbrac{\hscript _{x_2}^{(B,\beta )^*}(I_{H_3})}
   =\hscript _1^{(A,\alpha )^*}(B_{x_2})\\
   &=\trace (\alpha _{x_1}B_2)A_{x_1}=C_{x_1}C_{x_2}
\end{align*}
as we know from Example~3. In the general case, the marginals for $\iscript$ become
\begin{align*}
\iscript _{x_1}^1&=\iscriptbar _n\paren{\,\iscriptbar _{n-1}\cdots\iscriptbar _2(\iscript _{1x_1}(\rho ))}\\
\iscript _{x_2}^2&=\iscriptbar _n\paren{\,\iscriptbar _{n-1}\cdots\iscriptbar _3\paren{\iscript _{2x_2}(\,\iscriptbar _1(\rho ))}}\\
   &\quad\vdots\\
   \iscript _{x_n}^n&=\iscript _{nx_n}\paren{\,\iscriptbar _{n-1}\cdots\iscriptbar _2\paren{\,\iscriptbar _1(\rho )}}
\end{align*}

We can also define sequential products of observables $A_i\in\ob (H)$, $i=1,2,\ldots n$. For example, we define the \textit{L\"uders sequential product} to be the $n$-observable \cite{gn01,gud220,gud22}
\begin{equation*}
(A_1\circ\cdots\circ A_n)_{x_1\cdots x_n}=(A_{1x_1})^{1/2}\cdots (A_{(n-1)x_{n-1}})^{1/2}A_{nx_n}(A_{(n-1)x_{n-1}})^{1/2}\cdots (A_{1x_1})^{1/2}
\end{equation*}
In the case when $n=2$, we have $(A_1\circ A_2)_{x_1x_2}=A_{1x_1}^{1/2}A_{2x_2}A_{1x_1}^{1/2}$. The marginals become $(A_1\circ A_2)_{x_1}^1=A_{x_1}$ and
\begin{equation*}
(A_1\circ A_2)_{x_2}^2=\sum _{x_1}A_{1x_1}^{1/2}A_{2x_x}A_{1x_1}^{1/2}
\end{equation*}
For simplicity, in the remainder of this article, we assume that $n=2$ and the generalization to arbitrary $n$ will be clear.

If $A,B\in\ob (H)$ and $\iscript\in\instr (H,H_1)$ measures $A$, we define the $\iscript$-\textit{sequential product of} $A$ \textit{then} $B$ to be the bi-observable $\paren{A\sqbrac{\iscript}B}_{xy}=\iscript _x^*(B_y)$ \cite{gud120,gud220,gud22}. Moreover, we define $B$ \textit{conditioned by} $A$ \textit{related to}
$\iscript$ \cite{gud120,gud220} by
\begin{equation*}
(B\mid\iscript\mid A)_y=\paren{A\sqbrac{\iscript}B}_y^2=\sum _x\iscript _x^*(B_y)
\end{equation*}
Since the 1-marginal is $\paren{A\sqbrac{\iscript}B}_x^*=A_x$ we see that $(B\mid\iscript\mid A)$ and $A$ coexist with joint bi-observable $A\sqbrac{\iscript}B$. For the L\"uders instrument $\lscript ^{(A)}$ we have
\begin{equation*}
\paren{A\sqbrac{\lscript ^{(A)}}B}_{xy}=\lscript _x^{(A)^*}(B_y)=A_x^{1/2}B_yA_x^{1/2}=(A\circ B)_{xy}
\end{equation*}
For the Kraus instrument $\kscript _x(\rho )=K_x\rho K_x^*$ we obtain
\begin{equation*}
\paren{A\sqbrac{\kscript}B}_{xy}=K_x^*(B_y)=K_x^*B_yK_x
\end{equation*}
In the case of Holevo instrument $\hscript ^{(A,\alpha )}$ we have
\begin{equation*}
\paren{A\sqbrac{\hscript ^{(A,\alpha )}}B}_{xy}=\hscript _x^{(A,\alpha )^*}(B_y)=\trace (\alpha _xB_y)A_x
\end{equation*}
For a futher discussion on this, we refer the reader to \cite{gud120,gud220,gud22,gud23}.

\end{document}